\newtheorem{theorem}{Theorem}
\newtheorem{definition}{Definition}
\newtheorem{lemma}{Lemma}
\newtheorem{remark}{Remark}
\begin{document}
\global\long\def\norm#1{\left\Vert #1\right\Vert }%
\global\long\def\R{\mathbb{R}}%
 
\global\long\def\Rn{\mathbb{R}^{n}}%
\global\long\def\tr{\mathrm{Tr}}%
\global\long\def\diag{\mathrm{diag}}%
\global\long\def\Diag{\mathrm{Diag}}%
\global\long\def\C{\mathbb{C}}%
 
\global\long\def\E{\mathbb{E}}%
\global\long\def\vol{\mathrm{vol}}%
\global\long\def\argmax{\mathrm{argmax}}%

\global\long\def\ham{\mathrm{Ham}}%
\global\long\def\e#1{ \exp\left(#1\right)}%
\global\long\def\Var{\mathrm{Var}}%
\global\long\def\dint{{\displaystyle \int}}%
\global\long\def\step{\delta}%
\global\long\def\Ric{\mathrm{Ric}}%
\global\long\def\P{\mathbb{P}}%
\global\long\def\len{\text{\text{len}}}%
\global\long\def\lspan{\mathrm{span}}%

\title{Convergence of Gibbs Sampling:\\ Coordinate Hit-and-Run Mixes Fast
}
\author{Aditi Laddha\thanks{Georgia Tech, \{aladdha6, vempala\}@gatech.edu}\and
Santosh Vempala\footnotemark[1]}
\date{}
\maketitle

\begin{abstract}
Gibbs sampling, also known as Coordinate Hit-and-Run (CHAR), is a Markov chain Monte Carlo algorithm for sampling from high-dimensional
distributions. In each step, the algorithm selects a random coordinate and re-samples that coordinate from the distribution induced by fixing all the other coordinates. While this algorithm has become widely used over the past half-century, guarantees of efficient convergence have been elusive. We show that the Coordinate Hit-and-Run algorithm for sampling from a convex body $K$ in $\mathbb{R}^{n}$ mixes in $O^{*}(n^9 R^2/r^2)$ steps, where $K$ contains a ball of radius $r$ and $R$ is the average distance of a point of $K$ from its centroid. We also give an upper bound
on the conductance of Coordinate Hit-and-Run, showing that it is strictly worse than Hit-and-Run
or the Ball Walk in the worst case.
\end{abstract}

\section{Introduction}
\label{intro}
Sampling from a distribution in high-dimensional space is a fundamental problem 
and an essential ingredient of algorithms for optimization, integration,
statistical inference, and other applications. Progress on sampling
algorithms has led to many useful tools, both theoretical and practical.
In the most general setting, given access to a function $f:\R^{n}\rightarrow\R_{+},$
the goal is to generate a point $x$ whose density is proportional
to $f(x)$. Two special cases of particular interest are when $f$
is uniform over a convex body and when $f$ is a Gaussian restricted to a convex
set.

The general approach to sampling is to design an ergodic, time-reversible Markov chain whose state space is the convex body and which has the desired density as its stationary distribution. The key
question is to bound the rate of convergence of the Markov chain. The Ball walk~\cite{L90,KLS97,LV1} and Hit-and-Run~\cite{Boneh83,Smith84,LV3} are two Markov chains that
work in full generality and have been shown to mix rapidly (i.e.,
the convergence rate is polynomial in ambient dimension) for arbitrary log-concave densities.
Three decades of improvements have reduced the complexity of this problem to a small polynomial in the dimension. For a log-concave
density with support of diameter $D$, the mixing time of both Ball Walk and Hit-and-Run is $O^{*}(n^{2}D^{2})$\footnote{The $O^*$ notation suppresses
logarithmic factors and dependence on other parameters like error bound}, each step requiring $O(n^2)$ function evaluations and $O(n^2)$ arithmetic operations, giving total arithmetic complexity of $O^{*}(n^{4}D^{2})$\cite{KLS97,Lovasz1998,LV3}.

A simple and widely-used algorithm that pre-dates these developments is the Gibbs Sampler, proposed by Turchin in 1971 \cite{Turchin1971}.
It is inspired by statistical physics and is commonly used for sampling distributions \cite{diaconis2010gibbs,diaconis2012gibbs} and Bayesian
inference \cite{finkel2005incorporating,DBLP:journals/pami/GemanG84,george1993variable}. To sample from a multivariate density, at each step, the Gibbs sampling algorithm
selects a coordinate (either at random or in order, cycling through
the coordinates), fixes all other coordinates, and re-samples this coordinate from the induced distribution. This algorithm is very similar to
Hit-and-Run, except that instead of picking a direction uniformly
at random from the unit sphere, it is picked only from one of the
$n$ basis vectors (see \cite{Anderson2007} for a historical account
and more background). It was reported to be significantly faster than
Hit-and-Run in state-of-the-art software for volume computation and
integration \cite{CV13,emiris2014efficient,CV15b}. Gibbs sampling, also called Coordinate
Hit-and-Run, has a computational benefit: updating the current point
takes $O(n)$ time rather than $O(n^{2})$ even for polyhedra since
the update is along only one coordinate direction. Thus the overhead
per step is reduced from $O(n^{2})$, as in all previous algorithms,
to $O(n)$. However, despite half a century of intense study, the
convergence rate of Gibbs sampling has remained an open problem. 

This paper shows that Gibbs sampling/Coordinate Hit-and-Run mixes rapidly for any
convex body $K$. Before stating our main theorem formally, we define the Coordinate Hit-and-Run.

\paragraph{Coordinate Hit-and-Run.} Algorithm~\ref{alg:char} describes the Coordinate Hit-and-Run Markov chain, hereafter referred to as CHAR, for sampling uniformly from a convex body
$K\in\mathbb{R}^{n}$. Let $\{e_{i}: i\in [n]\}$ be
the standard basis for $\mathbb{R}^{n}.$ The input to the algorithm is the convex body $K$, a starting point $x^{(0)}$ in
the interior of $K$, and the number of steps $T$.

\begin{algorithm}[ht]

\caption{Coordinate Hit-and-Run (CHAR)}
\label{alg:char}
\SetAlgoLined

\textbf{Input:} a point $x^{(0)}\in K$, integer $T$.

\For{$i=1,2,\cdots,T$}{

Pick a uniformly random axis direction $e_{j}$

Set $x^{{(i)}}$ to be a random point along the line $\ell=\left\{ x^{(i-1)}+te_{j}\,:\:t\in\R\right\} $
chosen uniformly from $\ell\cap K$$.$

}
\textbf{Output:} $x^{{(T)}}$.

\end{algorithm}
 The stationary distribution of the Coordinate hit-and-run walk is the uniform distribution $\pi_K$ over $K$. To sample from a general log-concave density $f:\R^{n}\rightarrow\R_{+}$
the only change is in Step 2, where the next point $y$ is chosen
according to $f(y)$ restricted to $\ell$. In both cases, the process
is symmetric and ergodic, so the stationary distribution of the
Markov chain is the desired distribution.

We can now state our main theorem (see Sec.~\ref{sec:prelims} for the definition of a warm start).
\begin{theorem}
\label{thm:mixing}Let $K$ be a convex body in $\R^{n}$ containing
a unit ball. Let $R^{2}$ be the expected squared distance of a uniformly random point in $K$ from the centroid of $K$, and let $\pi_K$ denote the uniform distribution on $K$. Let $\sigma$ be a starting distribution and let $
\sigma^m$ be the distribution of the current point
after $m$ steps of Coordinate Hit-and-Run in $K$. Let $\varepsilon > 0$, and suppose that $\sigma$ is $M$-warm with respect to $\pi_K$. Then for
\begin{equation*}
    m > 7\cdot 10^4 \cdot \frac{M^2R^2n^9}{\varepsilon^2}\log\left(\frac{2M}{\varepsilon}\right),
\end{equation*}
the total variation distance between $\sigma^m$ and $\pi_K$ is less than $\varepsilon$.
\end{theorem}

By applying an affine transformation, $R$ can be made $O(\sqrt{n})$ (see \cite{CV2015}, \cite{KLS97}, \cite{LV2}). 
We note that from
a warm start, both the Ball Walk and Hit-and-Run have a mixing time of $O^*(n^{2}R^{2})$ \cite{KLS97,LV3}. While
our bound is likely not the best polynomial bound for CHAR, in Section
\ref{sec:Lower-bound}, we show that it is necessarily higher than
the bound for Hit-and-Run.

Concurrently and independently, Narayanan and Srivastava~\cite{narayanan2020mixing} also proved a polynomial bound on the mixing rate of Coordinate Hit-and-Run, with a different proof. They showed that CHAR mixes in $O^*(n^7R_1^4)$ steps where $R_1$ is the smallest number s.t., $B_\infty \subseteq K \subseteq R_1 B_\infty$, i.e., the cube sandwiching ratio ($R_1$ can be larger than $R$ in our theorem by a factor of $\sqrt{n}$). After an affine transformation, $R_1$ can be bounded by $O(n)$.

A key ingredient of our proof is a new ``$\ell_{0}$''-isoperimetric
inequality. We will need the following definition.
\begin{definition}[Axis-disjoint]
 Two measurable sets $S_{1},S_{2}$ are called \emph{axis-disjoint}
if $\forall x\in S_{1},\forall y\in S_{2},\vert\{i\in\left[n\right]:x_{i}=y_{i}\}\vert\leq n-2$.
\end{definition}
In other words, it is not possible to move from a point in $S_{1}$ to any point in $S_{2}$ in one step of CHAR and vice versa (see Fig. \ref{fig:axis-disjoint}).

\begin{figure}[ht]
\centering
\includegraphics[width = 0.5\textwidth]{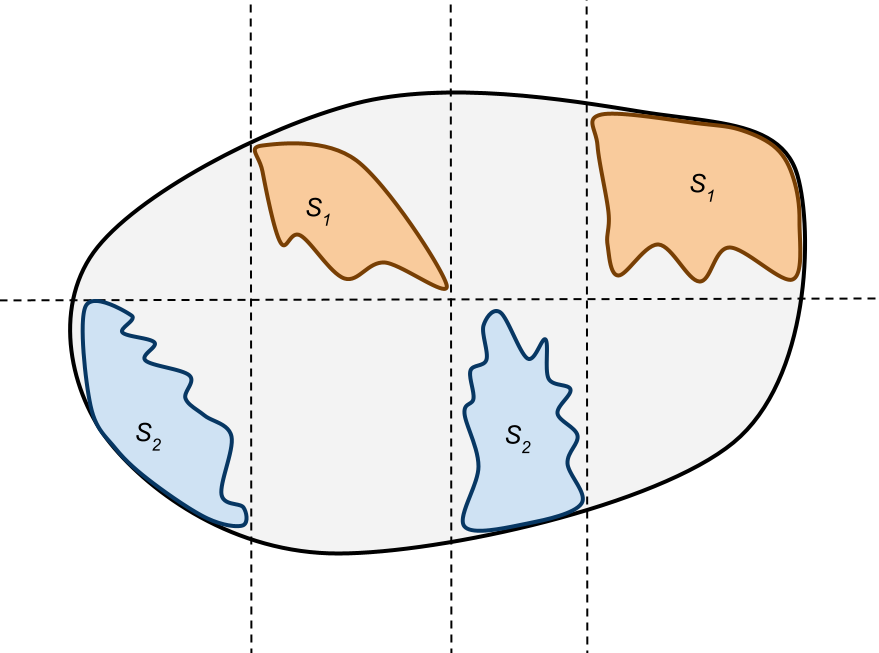}
\caption{Axis-disjoint subsets $S_{1}$ and $S_{2}$.}
\label{fig:axis-disjoint}
\end{figure}

The main component of the proof of Theorem \ref{thm:mixing} is the following isoperimetric inequality for axis-disjoint subsets of a convex body.
\begin{restatable}{theorem}{isoperimetry}
\label{thm:iso}Let $K$ be a convex body in $\R^{n}$ containing
a unit ball with $R^{2}=\E_{K}(\norm{x-z_K}^{2})$ where $z_K$ is the centroid of $K$. Let $S_{1,}S_{2}\subseteq K$
be two measurable subsets of $K$ such that $S_{1}$ and $S_{2}$ are axis-disjoint.
Then for any $\varepsilon\ge0$, the set $S_{3}=K\backslash \{S_{1} \cup S_{2}\}$
satisfies
\begin{equation*}
    \vol(S_{3})\geq\frac{\varepsilon}{800 \cdot n^{3.5}R}\cdot \left(\min\{\vol(S_{1}), \vol(S_{2})\}-\varepsilon \cdot \vol(K)\right).
\end{equation*}
\end{restatable}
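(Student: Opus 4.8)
The first step is to unpack axis-disjointness. Writing $\pi_i$ for the orthogonal projection that forgets the $i$-th coordinate, $S_1$ and $S_2$ are axis-disjoint if and only if $\pi_i(S_1)\cap\pi_i(S_2)=\emptyset$ for every $i\in[n]$: a common image point under $\pi_i$ is precisely a pair of points of $S_1$ and $S_2$ lying on one axis-parallel line (and a shared point is caught as the case of agreement on all $n$ coordinates). An equivalent, more operational form: every polygonal path made of axis-parallel segments lying in $K$ that starts in $S_1$ and ends in $S_2$ must pass through $S_3$ — walking along such a path and looking at the last time it leaves $S_1$, the other end of the segment containing that point lies on a common axis line with points of $S_1$, hence cannot be in $S_2$, so it is in $S_3$ (a small amount of care is needed at the corners of the staircase). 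For the proof I may also assume $S_1,S_2$ are open (the general case following by inner regularity), so that $\overline{S_1}\cap S_2=\overline{S_2}\cap S_1=\emptyset$, and I may assume $\vol(S_3)\le\epsilon\,\vol(K)$, since otherwise $\vol(S_3)\ge\epsilon\,\vol(K)\ge\epsilon\min\{\vol(S_1),\vol(S_2)\}$ already dominates the claimed bound (using $n^{3.5}R\log n\gg 1$).

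The core of the plan is to trade axis-disjointness for an honest Euclidean separation at the cost of discarding a set of small measure, and then invoke the classical isoperimetric inequality for the uniform measure on a convex body (Lov\'asz--Simonovits; Kannan--Lov\'asz--Simonovits), which for a body with $R^2=\E_K(\norm{x-z_K}^2)$ gives a bound of the form $\vol(C)\gtrsim (d/R)\min\{\vol(A),\vol(B)\}$ for any partition $K=A\sqcup B\sqcup C$ with $\mathrm{dist}(A,B)\ge d$. Fix a scale $d>0$ and set $E=\{x\in S_1:\mathrm{dist}(x,S_2)\le d\}\cup\{y\in S_2:\mathrm{dist}(y,S_1)\le d\}$. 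The key geometric observation is that a point $x\in S_1$ with $\overline{B(x,d)}\subseteq K$ that lies within distance $d$ of $S_2$ in fact lies within distance $d$ of $S_3$: taking $y\in S_2$ with $\norm{x-y}\le d$ and walking from $x$ to $y$ one coordinate at a time, the resulting staircase stays inside $\overline{B(x,d)}\subseteq K$, runs from $S_1$ to $S_2$, and hence meets $S_3$ at a point within distance $d$ of $x$. Consequently $E$ is contained in the union of the $d$-shell $K\setminus K_{-d}$, whose volume is $O(nd)\vol(K)$ because $K$ contains a unit ball, and the $d$-neighborhood of $S_3$ inside $K$. Choosing $d\asymp\epsilon/(n^{3.5}\log n)$ so that $\vol(E)$ is small enough, the sets $S_1\setminus E$ and $S_2\setminus E$ are at distance $\ge d$, so applying the classical inequality to the partition $(S_1\setminus E,\ S_2\setminus E,\ S_3\cup E)$ yields $\vol(S_3)+\vol(E)\gtrsim (d/R)\big(\min\{\vol(S_1),\vol(S_2)\}-\vol(E)\big)$, and rearranging—using that $\vol(E)$ is negligible compared with the right-hand side—gives the theorem. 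The polynomial factor $n^{3.5}\log n$ is expected to emerge from iterating this over dyadic scales of $d$ and across the $n$ coordinate directions, which is where the genuine work lies.

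The main obstacle is exactly the step bounding $\vol(E)$, equivalently the volume of the $d$-neighborhood of $S_3$, \emph{without} already assuming the conclusion: the crude bound $\vol(E)\le\epsilon\,\vol(K)$ is far too weak, since it must be compared against the much smaller quantity $(d/R)\min\{\vol(S_1),\vol(S_2)\}$. I expect to resolve this either (i) by a bootstrapping argument—assume $\vol(S_3)$ is as small as the target bound allows and deduce that the interface set $E$ must then itself be correspondingly small, using that $S_3$ is squeezed between the axis-disjoint sets $S_1$ and $S_2$—or (ii) by replacing the separation reduction with a direct multicommodity flow from $S_1$ to $S_2$ supported on axis-parallel staircases inside $K$: since every such staircase crosses $S_3$, a congestion bound of $O(n^{3.5}R\log n)$ for a suitably randomized family of staircases (the randomization spreading the paths so that no point is overloaded, with $R$ entering through the lengths of the segments, and the powers of $n$ through the number of coordinates and the boundary-shell correction) would give $\vol(S_3)\gtrsim\min\{\vol(S_1),\vol(S_2)\}/(n^{3.5}R\log n)$ directly. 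Either way, quantifying how spread out the axis-parallel connection between two axis-disjoint sets is forced to be is the heart of the matter; the remaining ingredients (the boundary-shell volume estimate, the second-moment isoperimetric inequality, and the reductions above) are standard convex geometry.
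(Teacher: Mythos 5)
Your reduction of axis-disjointness to the statement that every axis-parallel staircase in $K$ from $S_1$ to $S_2$ must have a corner in $S_3$ is correct, and the idea of trading this for Euclidean separation plus KLS isoperimetry is reasonable. But the proposal has a genuine gap, which you yourself flag: you never actually bound $\vol(E)$. The bound you do have, $\vol(E)\le \vol(K\setminus K_{-d})+\vol(N_d(S_3)\cap K)$, is unusable on both counts. First, the shell term is $\approx nd\,\vol(K)$, and since in your partition $E$ is absorbed into the middle set, $\vol(E)$ is subtracted \emph{outside} the parenthesis; you would need $nd\,\vol(K)\ll (d/R)\min\{\vol(S_1),\vol(S_2)\}\le (d/2R)\vol(K)$, i.e.\ $n\ll 1/R$, which is never true. (This part is repairable: restrict to $K'=(1-\alpha)K$ first so the shell loss appears only as an additive $\epsilon\vol(K)$ \emph{inside} the parenthesis, as the target inequality permits.) Second, and fatally, $\vol(N_d(S_3))$ cannot be bounded by any function of $\vol(S_3)$ alone: a set of tiny measure can have a $d$-neighborhood of measure comparable to $\vol(K)$, so the "rearrangement" step is circular. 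Your proposed fixes are not arguments: the bootstrap (i) has no mechanism, since assuming $\vol(S_3)$ is small says nothing about $\vol(N_d(S_3))$, and the multicommodity-flow route (ii) is exactly the theorem restated — the entire difficulty is the congestion bound, which is not sketched.

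For comparison, the paper avoids neighborhoods of $S_3$ entirely. It tiles $(1-\alpha)K$ by cubes of side $\delta\asymp\epsilon/(n^{1.5})$ and splits the cubes meeting $S_1$ into those where $S_1$ occupies less than half the cube and those where it occupies at least half. For the first kind it applies a per-cube isoperimetric inequality for axis-disjoint sets (proved via the $O(n\log n)$ mixing time of CHAR on a cube — a coupon-collector argument — converted into a conductance bound), which produces $S_3$-mass \emph{inside} those cubes. For the second kind it applies Euclidean isoperimetry (Lemma~\ref{Euclidean-isoperimetry}) to the \emph{union of dense cubes}, not to $S_1$ and $S_2$ themselves, and shows each boundary facet of that union forces a constant fraction of the adjacent cube to be reachable from $S_1$ along one coordinate, hence to lie in $S_3$ (again invoking the cube lemma when the neighboring cube is of the first kind). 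The discretization into cubes is precisely what substitutes for the neighborhood-volume control you are missing. As it stands, your write-up is a plan whose central step is open, not a proof.
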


At a high level, we follow the proof of rapid mixing based on the
conductance of Markov chains \cite{SJ89} in the continuous setting
\cite{LS93}. We give a simple, new one-step coupling lemma (Lemma~\ref{lem:disjoint}) which reduces the problem of lower bounding the conductance of CHAR for a convex body $K$ to lower bounding the isoperimetric coefficient of axis-disjoint subsets
in $K$. Roughly speaking, the inequality says the following:
If two subsets of $K$ are axis-disjoint, then the remaining
mass of the body is proportional to the smaller of the two subsets.
This inequality is our main technical contribution. In comparison, the isoperimetric inequality for Euclidean distance says that for any two subsets of a convex body,
the remaining mass is proportional to their (minimum) Euclidean distance
times the smaller of the two subset volumes.

Standard approaches to proving such inequalities, notably localization
\cite{KLS95,LeeV17KLS}, which reduce the desired high-dimensional
inequality to a one-dimensional inequality, do not seem to be directly
applicable to proving this ``$\ell_{0}$-type'' inequality. So we
develop a first-principles approach where we first prove the isoperimetric inequality
for cubes (Lemma~\ref{lem:cube_isoperimetry}), taking advantage of their product structure, and then for general convex bodies (Theorem~\ref{thm:iso}) using a tiling of space with cubes. In the latter part, we will use several known properties of convex bodies,
including Euclidean isoperimetry.

\subsection{Preliminaries.}\label{sec:prelims}
We restate a few useful definitions from \cite{LS93}.
\begin{itemize}
    \item \textbf{Markov chain}: Let $\mathcal{M}$ be a Markov chain with state space $K$ and stationary distribution $Q$. For any measurable subset $S\subseteq K$ and $x \in K$, let $P_{x}(S)$
be the probability that one step of $\mathcal{M}$ from $x$
goes to a point in $S$. Distribution $Q$ is called stationary if one step from it gives the same distribution, i.e., for any measurable subset $S\subseteq K$,
\[
\int_K P_x(S) \;dQ(x) = Q(S).
\]
The Markov chain is {\bf time-reversible} if for any two subsets $A,B\subseteq K$ 
\[
\int_A P_{x}(B)\;dQ(x)=\int_B P_{x}(A)\;dQ({x}).
\]
In other words, the probability of going from $A$ to $B$ is the same as that of going from $B$ to $A$ for a reversible Markov chain.

The ergodic flow of a measurable subset $S\subseteq K$, denoted by $p(S)$  is defined as
\begin{equation*}
 p(S)=\int_{S}P_{x}(K\setminus S)\,dQ(x).
\end{equation*}
\item \textbf{Conductance}:
The conductance of a subset $S\subseteq K$, denoted by $\phi(S)$,  is defined as
\begin{equation*}
 \phi(S)=\frac{\int_{S}P_{x}(K\setminus S)\,dQ(x)}{\min\{Q(S), Q(K\backslash S)\}},
\end{equation*}
``Conductance measures the probability of moving from $S$ to $K\backslash S$, conditioned on starting in $S$ in the stationary distribution'' \cite{kannan2003rapid}.

The conductance of the Markov chain is defined as
\begin{equation*}
    \phi = \inf_{0 < Q(S) \leq 1/2} \phi(S).
\end{equation*}
For any $s\in[0,1/2]$ the $s$-conductance of the Markov chain is defined as
\[
\phi_{s}=\inf_{S:s < Q(S)\le\frac{1}{2}} \frac{p(S)}{Q(S)-s}.
\]
$s$-conductance is a weaker notion of conductance; it is defined only for sets with relatively large measures. Both conductance and $s$-conductance can be used to bound the mixing time of a Markov chain \cite{LS93}.
\item \textbf{Warm Start}: Given distributions $P$ and $Q$ on the same state space $\mathcal{A}$, $P$ is said to be $M$-warm with respect to $Q$ if 
\begin{equation*}
    M = \sup_{A\subseteq \mathcal{A}} \frac{P(A)}{Q(A)}.
\end{equation*}
If the initial distribution $Q_0$ is $O(1)$-warm with respect to the stationary distribution $Q$, we say that $Q_0$ is a warm start for $Q$.
\item \textbf{Lazy chain}: A lazy version of a Markov chain with transition probability $P$ is one where we use the transition probability $P_x(\{y\}) =
P_x(\{y\})/2 + \mathbf{1}(x=y)/2$. With probability $1/2$, the chain feels lazy and stays in the same state.
\item For a body $K\subseteq \mathbb{R}^n$, let $\pi_K$ denote the uniform distribution on $K$ and $\mathbb{E}_K(X)$ denote the expected value of $X$ with respect to $\pi_K$.
\item For a set $S \subset \mathbb{R}^n$, we use $\partial S$ to denote the boundary of $S$ and $\vol(S)$ to denote its $n$-dimensional Lebesgue measure.
    \item \textbf{Internal Boundary}: For a convex body $K$ and a measurable subset $S \subseteq K$, the internal boundary of $S$ with respect to $K$ is defined as $\partial_K(S) = \partial S \cap {\mathrm{Int}} K$, where $\mathrm{Int} K$ denotes the interior of $K$.
\end{itemize}

The following theorem shows that the $s$-conductance of a Markov
chain bounds its rate of convergence from a warm start.
\begin{theorem}
\cite{LS93} \label{thm:sconductance} Suppose that a lazy, time-reversible Markov chain with
stationary distribution $Q$ has $s$-conductance at least $\phi_{s}$.
Then with initial distribution $Q_{0}$, and 
\[
H_{s}=\sup\left\{ \left|Q(A)-Q_{0}(A)\right|\,:\,A\subset K,Q(A)\le s\right\} ,
\]
the distribution $Q_{t}$ after $t$ steps satisfies 
\[
d_{TV}(Q_{t},Q)\le H_{s}+\frac{H_{s}}{s}\left(1-\frac{\phi_{s}^{2}}{2}\right)^{t}.
\]
\end{theorem}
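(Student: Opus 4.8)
The plan is to run the classical potential-function argument of Lovász and Simonovits. For the distribution $Q_t$ after $t$ steps, introduce $h_t:[0,1]\to[0,1]$ defined by $h_t(y)=\sup\{Q_t(A):A\subseteq K,\ Q(A)=y\}$; equivalently $h_t(y)=\int_0^y f^*(u)\,du$, where $f^*$ is the non-increasing rearrangement (with respect to $Q$) of the density $dQ_t/dQ$. From this description $h_t$ is concave, $h_t(0)=0$, $h_t(1)=1$, $h_t(y)\ge y$, and $d_{TV}(Q_t,Q)=\max_y(h_t(y)-y)$. The argument then has two parts: (i) a one-step inequality that converts the $s$-conductance hypothesis into a smoothing/averaging operation on $h_t$, and (ii) bookkeeping that traps $h_t$ below an explicit envelope decaying geometrically at rate $1-\phi_s^2/2$.

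For (i), fix a set $A$ attaining $h_{t+1}(x)$, so $Q_{t+1}(A)=\int_K P_y(A)\,dQ_t(y)$. Because the chain is lazy, $g(y):=P_y(A)\ge\tfrac12$ for $y\in A$ and $g(y)\le\tfrac12$ for $y\notin A$, so the super-level set $\{g\ge\lambda\}$ contains $A$ when $\lambda\le\tfrac12$ and is contained in $A$ when $\lambda>\tfrac12$. Splitting the layer-cake integral $\int_K g\,dQ_t=\int_0^{1/2}Q_t(\{g\ge\lambda\})\,d\lambda+\int_{1/2}^1 Q_t(\{g\ge\lambda\})\,d\lambda$ and applying Jensen's inequality with the concavity of $h_t$ on each piece gives $Q_{t+1}(A)\le\tfrac12(h_t(x+\tau)+h_t(x-\tau))$, where $\tau=2\int_0^{1/2}Q(\{y\notin A:P_y(A)\ge\lambda\})\,d\lambda$. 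A second layer-cake identity together with reversibility of $Q$ identifies $\tau$ with twice the ergodic flow, $\tau=2p(A)=2p(K\setminus A)$, and the $s$-conductance bound gives $p(A)\ge\phi_s(\min\{x,1-x\}-s)$ whenever $\min\{x,1-x\}>s$. Since $\rho\mapsto\tfrac12(h_t(x-\rho)+h_t(x+\rho))$ is non-increasing by concavity, we may replace $\tau$ by the smaller quantity $\tau_x:=2\phi_s(\min\{x,1-x\}-s)_+$, obtaining the clean recursion $h_{t+1}(x)\le\tfrac12(h_t(x-\tau_x)+h_t(x+\tau_x))$ for all $x$ (when $\min\{x,1-x\}\le s$ this just reads $h_{t+1}(x)\le h_t(x)$, again by concavity). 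I will assume $\phi_s\le\tfrac12$, which keeps $x\pm\tau_x$ inside $[s,1-s]$; this is standard and loses nothing in the regime of interest.

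For (ii), the warm-start hypothesis gives $h_0(y)-y\le H_s$ when $y\le s$ or $y\ge 1-s$, and then, using $h_0(0)=0$, $h_0(1)=1$ and concavity, $h_0(y)-y\le H_s+\tfrac{H_s}{s}\min\{y-s,1-s-y\}$ on $[s,1-s]$. Fix a concave profile $\omega:[0,1]\to[0,\infty)$ that vanishes on $[0,s]\cup[1-s,1]$, dominates $\min\{y-s,1-s-y\}$ on $[s,1-s]$, has $\max\omega\le 1$, and satisfies the curvature inequality $\tfrac12(\omega(x-\tau_x)+\omega(x+\tau_x))\le(1-\tfrac{\phi_s^2}{2})\,\omega(x)$; a constant multiple of $\sqrt{\min\{y-s,1-s-y\}}$ does this, since $\tfrac12(\sqrt{1-a}+\sqrt{1+a})\le\sqrt{1-a^2/4}\le 1-a^2/8$ with $a=2\phi_s$ produces exactly the factor $1-\phi_s^2/2$. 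One then shows by induction on $t$ that $h_t(y)\le y+H_s+\tfrac{H_s}{s}(1-\tfrac{\phi_s^2}{2})^t\omega(y)$: the base case is the warm-start bound, and the inductive step plugs the hypothesis at $y=x\pm\tau_x$ into the one-step recursion and invokes the curvature inequality. Maximizing over $y$ and using $\max\omega\le 1$ yields $d_{TV}(Q_t,Q)=\max_y(h_t(y)-y)\le H_s+\tfrac{H_s}{s}(1-\tfrac{\phi_s^2}{2})^t$.

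The main obstacle is part (i): the $s$-conductance hypothesis is a statement about the stationary flow, i.e.\ about $Q$, whereas the one-step evolution operates on $Q_t$, which a priori has no relation to $Q$; the layer-cake/level-set manipulation above is precisely what bridges the two, and the laziness of the chain is essential, since it is what lets the integral split at level $\tfrac12$ into a piece over sets containing $A$ and a piece over sets contained in $A$. The remaining delicate point is selecting the profile $\omega$: it must both dominate the warm-start bound and contract at the advertised rate, and the square-root shape is forced — a piecewise-linear profile is averaged exactly and contracts only by a factor of $1$.
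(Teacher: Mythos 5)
The paper does not prove this theorem—it is quoted directly from Lovász and Simonovits \cite{LS93}—and your argument is a correct reconstruction of exactly the proof given there: the concave level-set potential $h_t$, the one-step smoothing lemma with width $2p(A)$ obtained from laziness, the layer-cake split at $1/2$, and reversibility, followed by induction against the square-root envelope. The few points you gloss over (sup attained by a set versus a fractional set, near-maximizers) are handled the same way in \cite{LS93} and do not affect correctness.
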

 
\section{The isoperimetric inequality}

\label{sec:isoperimetry}
Before proving Theorem~\ref{thm:iso}, we need a few definitions. 
\begin{definition}[Axis-aligned Line]
A line $\ell$ in $\mathbb{R}^n$ is called \emph{axis-aligned} if $\ell = \{ x_0+te_i: t \in \mathbb{R}\}$ for some $i\in \{1,\ldots, n\}$ and a point $x_0\in \mathbb{R}^n$.
\end{definition}
\begin{definition}[Axis-aligned Cube]
A cube $C$ is called \emph{axis-aligned} if
\begin{equation*}
    C = \{x \in \mathbb{R}^n : \norm{x-x_0}_{\infty} \leq r\},
\end{equation*}
where $x_0\in \mathbb{R}^n$ is a fixed point and $r$ is a positive constant.
\end{definition}
\begin{definition}[Isoperimetric coefficients for cubes]
The isoperimetric coefficient for axis-aligned cubes, $\psi_c$, is the largest positive real such that for any axis-aligned cube $C\in\mathbb{R}^{n}$, and any two axis-disjoint
subsets $S_{1},S_{2}\subseteq C$, with $S_{3}=C\setminus \{ S_{1}\cup S_{2}\}$,
\begin{equation*}
\vol(S_{3})\ge\psi_c \cdot\min\left\{ \vol(S_{1}),\vol(S_{2})\right\} .
\end{equation*}
\end{definition}

\begin{lemma}[Cube isoperimetry\label{lem:Cube-isoperimetry}]
\label{lem:cube_isoperimetry}
Let $C$ be an axis-aligned cube, and let $S_1,S_2 \subseteq C$ be two measurable subsets of $C$ such that $S_1$ and $S_2$ are axis-disjoint. Then the set $S_{3}=C\setminus \{ S_{1}\cup S_{2}\}$ satisfies
\begin{equation*}
\vol(S_{3})\ge\frac{\ln{2}}{n}\cdot\min\left\{ \vol(S_{1}),\vol(S_{2})\right\} .
\end{equation*}
\end{lemma}
\begin{remark}
We believe that the bound above is not optimal, and even an absolute
constant factor might be possible.
\end{remark}
\begin{proof}
Without loss of generality, let $C$ be a unit cube and let $\vol(S_1)\leq \vol(S_2)$. For each coordinate $i \in [n]$, consider the projection functions
\begin{align*}
    \pi_{i}&:{\mathbb  {R}}^{{n}}\to {\mathbb  {R}}^{{n-1}},\\
    \pi _{i}&:x=(x_{1},\dots ,x_{n})\mapsto {\hat {x}}_{i}=(x_{1},\dots ,x_{i-1},x_{i+1},\dots ,x_{n}).
\end{align*}
For each $i \in [n]$, extend $\pi_i(S_1)$ to
\begin{equation*}
    \mathcal{E}_i(S_1) = \{(x,y): x \in \pi_i(S_1), y \in [0,1] \}.
\end{equation*}
Note that for all $i \in [n]$, $\mathcal{E}_i(S_1) \cap S_2 = \emptyset$ as $S_1$ and $S_2$ are axis disjoint. Thus,  $\mathcal{E}_i(S_1) \subseteq S_1 \cup S_3$ and
\begin{equation}
    \vol(\mathcal{E}_i(S_1)) \leq  \vol(S_1) + \vol(S_3).\label{eq:1}
\end{equation}
Since the side length of the cube is $1$, $\vol_n(\mathcal{E}_i(S_1)) = \vol_{n-1}(\pi_i(S_1))$.
Summing up inequality~\eqref{eq:1} over all $i \in [n]$, we get
\begin{align*}
     n\cdot \vol(S_1) &+ n\cdot \vol(S_3) \geq \sum_{i=1}^n\vol(\mathcal{E}_i(S_1)) =  \sum_{i = 1}^n \vol_{n-1}(\pi_i(S_1))
\end{align*}
After shifting $n\cdot \vol(S_1)$ to the RHS and using AM-GM inequality,
\begin{align*}
     n\cdot \vol(S_3) &\geq n\cdot \left( \prod_{i=1}^n \vol_{n-1}(\pi_i(S_1))\right)^{\frac{1}{n}} - n \cdot \vol(S_1).
\end{align*}
The Loomis–Whitney inequality \cite{loomis1949inequality} states that for any subset $S \subset \mathbb{R}^n$,
\begin{equation*}
    \prod_{i=1}^n \vol_{n-1}(\pi_i(S)) \geq \vol(S)^{n-1}.
\end{equation*}
Using the Loomis Whitney inequality on $S_1$, we get
\begin{align*}
    n\cdot \vol(S_3) &\geq  n\cdot \left(  \vol(S_1)\right)^{\frac{n-1}{n}} - n \cdot \vol(S_1) \\
    &= n\cdot \vol(S_1) \cdot \left( \frac{1}{\vol(S_1)^{\frac{1}{n}}}-1\right) \geq n \cdot \vol(S_1) \cdot (2^{\frac{1}{n}} - 1) \\
    &\geq \vol(S_1)\cdot \ln{2}.
\end{align*}
The last inequality follows from the fact that $\lim_{n\to \infty} n(2^{\frac{1}{n}}-1) = \ln 2$. 
\end{proof}
Before proceeding to the proof of isoperimetry for general convex bodies, we state two lemmas:
\begin{lemma} \label{lem:cube-}
Let $C$ be an axis-aligned unit cube. Let $S_1$ and $S_2$ be axis-disjoint subsets of $C$ with $\vol(S_1) \leq \frac{2}{3} \cdot \vol(C)$. Let $S_3 = C\backslash \{S_1 \cup S_2\}$. Then
\begin{equation*}
    \vol(S_3) \geq \frac{\psi_c}{4} \cdot \vol(S_1).
\end{equation*}
\end{lemma}
\begin{proof}
Since $S_3 = C \backslash \{S_1 \cup S_2\}$,
\begin{equation}
   \vol(S_3) \geq 1 - \vol(S_1) - \vol(S_2) \label{eq:2}
\end{equation}
 Applying Lemma \ref{lem:Cube-isoperimetry} to $C$ gives
 \begin{equation}
     \vol(S_3) \geq \psi_c\cdot \min \left\{\vol(S_1),\vol(S_2)\right\} \label{eq:3}
 \end{equation}
A lower bound on $\vol(S_3)$ is the maximum of the bounds obtained from~\eqref{eq:2} and~\eqref{eq:3}, i.e.,
\begin{align*}
   \vol(S_3 ) &\geq \max\left\{1 - \vol(S_1) - \vol(S_2),\; \psi_c\cdot \min \left\{\vol(S_1),\vol(S_2)\right\}\right\} \notag \\
    &\geq \psi_c \cdot \frac{\vol(S_1)}{4}.
\end{align*}
The last inequality follows from the constraints $ 0 < \vol(S_1) \leq 2/3$ and $ \vol(S_1)+\vol(S_2) \leq 1$.
\end{proof}
In the next lemma, we restate an isoperimetric inequality from \cite{KLS95}.

\begin{lemma}[Euclidean isoperimetry\label{Euclidean-isoperimetry}]
\cite{KLS95} Let $K\subset\R^{n}$ be a convex body containing a
unit ball and $R^{2}=\E_{K}(\norm{x-z_k}^{2})$ where $z_K$ is the centroid of $K$. For a subset
$S\subseteq K$, let $\partial_K(S)$ denote the
boundary of $S$, relative to $K$. Then for any $S \subseteq K$  of volume at most $\vol(K)/2$, we have 
\[
\vol_{n-1}(\partial_K S)\ge\frac{\ln 2}{R} \cdot \vol(S).
\]
\end{lemma}

We can now prove the isoperimetric inequality for axis-disjoint subsets of a convex body, which we restate below for convenience.

\isoperimetry*
\begin{proof}
Without loss of generality, let $\vol(S_{1})\le\vol(S_{2})$.
We also assume that $K$ contains the origin (otherwise we can shift $K$ by its mean). Let $K'=(1-\alpha)K$
for $\alpha = \frac{\varepsilon}{2 n}$, and let $S_{i}'=S_{i}\cap K'$ for $i\in \{1,2\}$.
Note that $\vol(K') = (1-\alpha)^n \cdot \vol(K)$ and therefore 
\begin{equation*}
\vol(K\backslash K') = (1-(1-\alpha)^n)\cdot\vol(K) \geq \frac{\epsilon}{2}\cdot \vol(K).
\end{equation*}
For any set $X \subseteq K$, we have
\begin{equation}
    \vol(X \cap K') \geq \vol(X) - \vol(K\backslash K')\geq  \vol(X) - \frac{\varepsilon}{2}\cdot \vol(K). \label{eq:4}
\end{equation}

Next, consider a standard lattice of width $\delta$, with each lattice
point inducing an $n$-dimensional cube of side length $\delta$. Since $K$ contains a unit ball, $\delta=\frac{\alpha}{4\sqrt{n}}$ ensures that any cube that intersects $K'$ and all its neighboring lattice cubes are fully contained in $K$. 
For a set of cubes $\mathcal{W}$, let $\mu(\mathcal{W}) = \cup_{c\in \mathcal{W}} c$.

Let ${\cal C}$ be the set of cubes that intersect $S_{1}$. We partition $\mathcal{C}$ into two sets:
\begin{itemize}
    \item $\mathcal{C}_1 = \{c\in \mathcal{C}: \vol(c \cap S_1) \leq \frac{2}{3} \vol(c)\}$, the subset of cubes in $\mathcal{C}$ where $S_{1}$
takes up at most $\frac{2}{3}$ of the volume
of the cube, and
    \item $\mathcal{C}_2 = \{c\in \mathcal{C}: \vol(c \cap S_1) {>} \frac{2}{3} \vol(c)\}$, the subset of cubes in $\mathcal{C}$ where $S_{1}$
takes up more than $\frac{2}{3}$ of each cube.
\end{itemize}
\begin{figure}
    \centering
    \includegraphics[width = 0.75\textwidth]{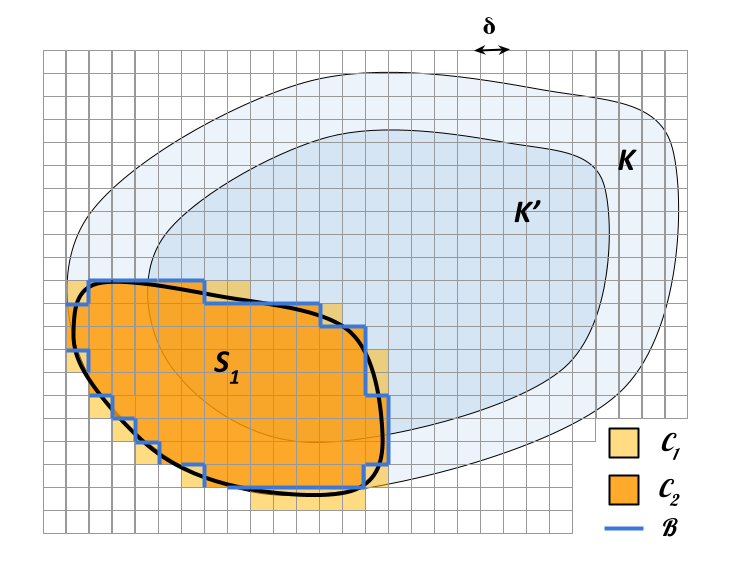}
    \caption{Illustration of the isoperimetry proof. $S_1$ is bounded by the thick black curve. $\mathcal{C}_1$ is shaded in yellow, $\mathcal{C}_2$ is shaded in orange, and $\mathcal{B}$ is colored blue.}
    \label{fig:iso}
\end{figure}

Depending on whether most of the volume $S_1$ is contained in $\mathcal{C}_1$ or $\mathcal{C}_2$, there are two possibilities:

\textbf{Case 1}: If $\vol(\mu({\cal C}_{1})\cap S_{1})\geq \vol(S_{1})/2$,
i.e., at least  half of $\vol(S_{1})$ resides
in cubes in ${\cal C}_{1}$, then we apply Lemma \ref{lem:Cube-isoperimetry} to every cube in $\mathcal{C}_1$ individually to bound $\vol(S_3)$. However, $K$ might not completely contain the cubes in $\mathcal{C}_1$, so before using the cube isoperimetry, we move to the contracted body $K'$. Let
\begin{equation*}
    \mathcal{C}_{1}'=\{c\in{\cal C}_{1}:c\cap K'\neq\emptyset\},
\end{equation*}
i.e., $\mathcal{C}_{1}'$ is the subset of cubes in $\mathcal{C}_1$ that intersect $K'$.
By our choice of $\alpha$, $\mu({\cal C}_{1}')\subseteq K$. Using inequality~\eqref{eq:4} on $\mu({\cal C}_{1}) \cap S_1$ gives
\begin{equation}
   \vol(\mu({\cal C}_{1}')\cap S_{1})\ge\vol(\mu({\cal C}_{1})\cap S_{1}\cap K')\geq\vol(\mu({\cal C}_{1})\cap S_{1})-\frac{\varepsilon}{2} \cdot \vol(K). \label{eq:5}
\end{equation}

Now consider a cube $c$ in ${\cal C}_{1}'$. Using Lemma \ref{lem:cube-} on $c$, 
\begin{align}
   \vol(S_3 \cap c)\geq  \psi_c \cdot \frac{\vol(S_1 \cap c)}{4}. \label{eq:6}
\end{align}
Summing~\eqref{eq:6} over every cube in ${\cal C}_{1}'$, we get 
\begin{align*}
\vol(S_{3}) & \geq\frac{\psi_c}{4}\cdot\sum_{c\in{\cal C}_{1}'}\vol(c\cap S_{1})=\frac{\psi_c}{4}\cdot\vol(\mu({\cal C}_{1}')\cap S_{1})\\
 & \geq\frac{\psi_c}{4}\cdot \left(\vol(\mu({\cal C}_{1})\cap S_{1})-\frac{\varepsilon}{2}\cdot \vol(K)\right) \tag*{(by inequality~\eqref{eq:5})}\\
 & \geq\frac{\psi_c}{4}\cdot \left(\frac{1}{2}\cdot\vol(S_{1})-\frac{\varepsilon}{2}\cdot\vol(K)\right) \tag*{(since $\vol(\mu({\cal C}_{1})\cap S_{1})\geq \vol(S_{1})/2$)}\\
  &=\frac{\psi_c}{8}\cdot \left(\vol(S_{1})-\varepsilon\cdot \vol(K)\right).
\end{align*}

\textbf{Case 2}: If $\vol(\mu({\cal C}_{1})\cap S_{1}) < \frac{1}{2}\cdot \vol(S_{1})$, then $\vol(\mu({\cal C}_{2})\cap S_{1})\geq \frac{1}{2}\cdot\vol(S_{1})$.
Let  $\mathcal{B}$ be the set of facets of the grid-cubes that intersect $\partial_K(\mu({\cal C}_{2})$.
Since ${\cal C}_2$ is a set of axis-aligned $n$-dimensional cubes,  $\mathcal{B}$ is a set of axis-aligned $(n-1)$-dimensional cubes. Consider a facet $f$ in  $\mathcal{B}$ with normal axis $e_{i}$, and let $f_1$ be the grid-cube that is adjacent to $f$ and not in ${\cal C}_{2}$, and $f_2$ be the grid-cube that is adjacent to $f$ an in ${\cal C}_{2}$.

Let $\pi(f)$ denote the projection of $S_1 \cap f_2$ on $f$, i.e.,
\begin{equation*}
    \pi(f) = \{x \in f: \exists y \in S_1 \cap f_2 \;\text{ s.t. }\; x_j = y_j, \; \forall j \in [n]\backslash \{i\}\}.
\end{equation*}
and let $\mathcal{E}(f)$ denote the extension of this projection along the $e_i$-axis in $f_1$, i.e., 
\begin{equation*}
    \mathcal{E}(f) = \{x \in f_1: \exists y \in \pi(f) \;\text{ s.t. }\; x_j = y_j, \; \forall j \in [n]\backslash \{i\}\}.
\end{equation*}
Then $S_2 \cap \mathcal{E}(f) = \emptyset$ as every point in $\mathcal{E}(f)$ is reachable from a point in $S_{1}$ along $e_{i}$ and therefore cannot be in
$S_{2}$. Because the grid size is $\delta$, $\vol_{n-1}(f) \cdot \delta = \vol(f_1) = \vol(f_2)$. 
\begin{figure}[h]
    \centering
    \includegraphics[width = 0.5\textwidth]{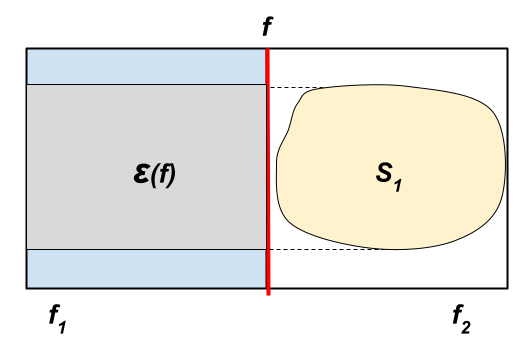}
    \caption{Figure illustrating $\mathcal{E}(f)$ for a facet $f \in \mathcal{B}$. $f$ is represented by the red line. $f_1$ is shaded in blue, $f_2$ is shaded in white, $S_1 \cap f_2$ is shaded in yellow, and $\mathcal{E}(f)$ is shaded in gray.}
    \label{fig:extension}
\end{figure}
\begin{align*}
    \vol_{n-1}(\pi(f)) \cdot \delta &\geq \vol(S_1 \cap f_2) \geq \frac{2}{3}\cdot \vol(f_2) \tag*{(since $f_2 \in \mathcal{C}_2$)}\\
    \Rightarrow \vol_{n-1}(\pi(f)) &\geq  \frac{2}{3}\cdot \frac{\vol(f_2)}{\delta} = \frac{2}{3}\cdot \vol_{n-1}(f).
\end{align*}
This gives 
\begin{equation}
  \vol(\mathcal{E}(f)) \geq \delta\cdot\vol_{n-1}(\pi(f)) \geq  \frac{2}{3}\cdot \delta \cdot \vol_{n-1}(f). \label{eq:7}  
\end{equation}

Since $\mathcal{E}(f)$ does not contain $S_2$, it can contain $S_1$ and $S_3$. If $\mathcal{E}(f)$ does not contain $S_1$, then $\mathcal{E}(f)\subseteq S_3$ and we can add the mass from~\eqref{eq:7} to $S_3$. 
Otherwise $f_1 \in \mathcal{C}_1$, and there are 2 possibilities:
\begin{itemize}
    \item $\vol(S_1  \cap f_1 )\leq \vol(f_1)/3$: We can simply subtract this volume from $\vol(\mathcal{E}(f))$ to get 
    \begin{align*}
        \vol(S_3  \cap f_1 ) &\geq \vol(\mathcal{E}(f)) - \vol(S_1  \cap  f_1) \\
        &\geq
        \frac{2}{3}\cdot \delta\cdot\vol_{n-1}(f) - \frac{1}{3} \cdot \vol(f_1) =\frac{\delta}{3} \cdot \vol_{n-1}(f).
    \end{align*}
    \item $\frac{1}{3}\cdot \vol(f_1) < \vol(S_1  \cap f_1 ) \leq \frac{2}{3}\cdot \vol(f_1)$:  Since $f_1 \in \mathcal{C}_1'$, using Lemma \ref{lem:cube-} on $f_1$, we get
\begin{align*}
   \vol(S_3 \cap f_1) \geq \psi_c \cdot \frac{\vol(S_1\cap f_1)}{4} \geq \frac{\psi_c}{12} \cdot \vol(f_1) = \frac{\psi_c }{12}\cdot \delta\cdot \vol_{n-1}(f).
\end{align*}
\end{itemize}
So, for every facet $f \in \mathcal{B}$, 
\begin{equation}
    \vol(S_3 \cap \mathcal{E}(f)) \geq \min\left\{\frac{\psi_c}{12}, \frac{1}{3} \right\} \cdot \delta \cdot \vol_{n-1}(f)  = \frac{\psi_c}{12}\cdot \delta \cdot \vol_{n-1}(f).
    \label{eq:8}
\end{equation}
Since an $n$-dimensional cube has $2n$ facets, it can contribute to the extension of a facet at most $2n$ times.  Therefore, every facet on the boundary $\mathcal{B}$ contributes at least $\frac{1}{2n} \cdot \frac{\psi_c }{12}\cdot \delta\cdot \vol_{n-1}(f)$ to $\vol(S_3)$. However, $f_{1}$ (and $\mathcal{E}(f)$) might not be fully contained in $K$.
So we again move to the contraction $K'$. Let 
\begin{equation*}
\mathcal{C}_{2}'=\{c\in{\cal C}_{2}:c\cap K'\neq\emptyset\}.    
\end{equation*}

Our choice of $\alpha$ ensures that every cube in $\mathcal{C}_{2}'$ and all the cubes that are adjacent to a cube in $\mathcal{C}_{2}'$ are fully contained
in $K$. Let $I = K' \cap \mu({\cal C}_2)$. Let $\mathcal{B}'$ be the set of facets of grid-cubes that intersect $\partial_{K'}(I)$, the internal boundary of $I$ relative to $K'$. 

\begin{figure}[h]
    \centering
    \includegraphics[width = 0.75\textwidth]{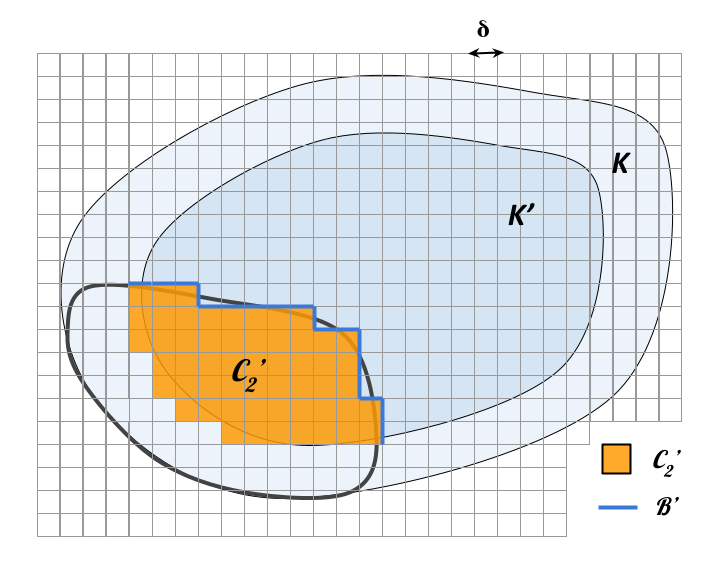}
    \caption{Illustration of $\mathcal{B}'$ and $\mathcal{C}_2'$ for the same convex body $K$ and subset $S_1$ as figure \ref{fig:iso}. Note that $\mathcal{B}' \subseteq \mathcal{B}$ and all the cubes adjacent to any facet in $\mathcal{B}'$ are completely contained in $K$.}
    \label{fig:contraction}
\end{figure}
Then $\mathcal{B}'\subseteq \mathcal{B}$.
To see this, consider a facet $f\in \mathcal{B}'$. Let $f_1, f_2$ be the grid-cubes adjacent to $f$ such that $f_2$ is fully contained in $I$. Because $f$ intersects the internal boundary of $I$ with respect to $K'$, $f_1$ cannot be fully contained in $I$. Since $f_1$ has a neighbor (namely $f_2$) which intersects $K'$, $f_1 \subset K'$. If $f_1 \in \mathcal{C}_2$, then by definition $f_1 \subset I$, which contradicts the fact that $f$ lies on the boundary of $I$. Therefore, $f_1 \notin \mathcal{C}_2$, and as a result $f$ lies on the boundary of $\mathcal{C}_2$, i.e.,  $f \in \mathcal{B}$.

Therefore using~\eqref{eq:8}, every facet $f\in \mathcal{B}'$ contributes at least
\begin{equation*}
    \frac{1}{2n} \cdot \frac{\psi_c}{12} \cdot \delta \cdot \vol_{n-1}(f)
\end{equation*} to $\vol(S_3)$. Summing this up over $\mathcal{B}'$,
\begin{align}
    \vol(S_3) \geq  \frac{1}{2n} \cdot \frac{\psi_c}{12} \cdot \delta \cdot \sum_{f \in \mathcal{B}'}\vol_{n-1}(f) \geq \frac{\delta \cdot \psi_c}{24 n} \cdot \vol_{n-1}\left(\partial_{K'}(I)\right). \label{eq:9}
\end{align}
Using Lemma~\ref{Euclidean-isoperimetry} on $I$ in $K'$,
\begin{equation}
    \vol_{n-1}\left(\partial_{K'}(I)\right)  \geq\frac{\ln{2}}{R}\cdot\min\left\{ \vol\left(I\right),\vol\left(K'\backslash I\right)\right\}. \label{eq:10}
\end{equation}
Using inequality~\eqref{eq:4} on $I$,
\begin{align}
    \vol(I) &= \vol(K' \cap \mu(\mathcal{C}_2)) \geq \vol(\mu(\mathcal{C}_2)) - \frac{\varepsilon}{2}\cdot \vol(K)\notag\\
    &\geq \frac{1}{2} \cdot (\vol(S_1) - \varepsilon \cdot \vol(K)), \label{eq:11}
\end{align}
where the last inequality follows from $\vol(\mu({\cal C}_{2})\cap S_{1})\geq \frac{1}{2}\cdot\vol(S_{1})$.
On the other hand, since $I \subseteq \mathcal{C}_2'$,
\begin{equation}
    \vol(I) \leq \vol\left(\mu({\cal C}_{2}')\right)\leq\frac{3}{2} \cdot \vol\left(S_{1}\cap \mu({\cal C}_{2}')\right)\leq\frac{3}{4}\cdot \vol\left(K\right), 
    \label{eq:12}
\end{equation}
where the second inequality is because $S_1$ occupies at least $\frac{2}{3}$ of every cube in $\mathcal{C}_2'$.
Therefore,
\begin{align}
     \vol(K' \backslash I ) &\geq \vol(K') - \frac{3}{4}\cdot \vol(K)\geq \left(1-\frac{\varepsilon}{2} -\frac{3}{4}\right)\cdot \vol(K) \notag \\
     &\geq \frac{\vol(K)}{4}-\frac{\varepsilon}{2} \cdot \vol(K)\notag\\
    &\geq \frac{1}{2}\cdot\left( \vol(S_1) - \varepsilon \cdot \vol(K) \right), \label{eq:13}
\end{align}
where the last inequality uses the fact that $\vol(S_1)\leq \frac{1}{2}\cdot \vol(K)$.
Combining inequalities~\eqref{eq:10},~\eqref{eq:11}, and~\eqref{eq:13}, we get
\begin{align}
\vol_{n-1}\left(\partial_{K'}(I)\right) 
 & \geq\frac{\ln{2}}{R}\cdot\frac{1}{2}\cdot\left( \vol(S_1) - \varepsilon\cdot \vol(K) \right) \label{eq:14}.
\end{align}
Plugging the bound from inequality~\eqref{eq:14} into inequality~\eqref{eq:9} gives
\begin{align*}
\vol(S_{3}) & \geq \frac{ \delta \cdot \psi_c }{24 n } \cdot \vol_{n-1}(\partial_{K'}(I)) \\
&\geq\frac{\delta\cdot \psi_c \cdot \ln{2}}{48 R n} \cdot\left( \vol(S_1) - \varepsilon \cdot\vol(K) \right)
\end{align*}
Using $\delta = \frac{\alpha}{4\sqrt{n}} = \frac{\varepsilon}{8 n\sqrt{n}}$ and $\psi_c = \frac{\ln{2}}{n}$, we have 
\begin{equation*}
    \vol(S_{3})\ge \frac{\varepsilon}{800 \cdot R n^{3.5}} \cdot\left( \vol(S_1) - \varepsilon \cdot \vol(K) \right). \qedhere
\end{equation*} 
\end{proof}

\section{Conductance}
\label{sec:conductance}
In this section, we bound the $s$-conductance of CHAR. The following simple
lemma lets us reduce the $s$-conductance of $K$ to the isoperimetry of axis-disjoint subsets of $K$. 
\begin{lemma}
\label{lem:disjoint}Let $S_1\subseteq K$ be a measurable subset of $K$ and $S_2 = K \backslash S_1$. Let $S_{1}'=\{x\in S_{1}:P_{x}(S_{2})<\frac{1}{2n}\}$ and $S_{2}'=\{x\in S_{2}:P_{x}(S_{1})<\frac{1}{2n}\}$.
Then $S_{1}'$ and $S_{2}'$ are axis disjoint.
\end{lemma}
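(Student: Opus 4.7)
The plan is to argue by contradiction. Suppose $S_1'$ and $S_2'$ are not axis-disjoint; then there exist $x \in S_1'$ and $y \in S_2'$ that agree in at least $n-1$ coordinates. Since $x$ and $y$ cannot coincide (as $S_1$ and $S_2 = K \setminus S_1$ are disjoint), they must agree in exactly $n-1$ coordinates, so they lie on a common axis-parallel line $\ell$ in some direction $e_j$.

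The key observation is that by convexity of $K$, the intersection $\ell \cap K$ is a single segment of some length $L$, and it is partitioned into $\ell \cap S_1$ and $\ell \cap S_2$ of total lengths $L_1$ and $L_2$ respectively, with $L_1 + L_2 = L$ and both $L_1, L_2 > 0$ (since the segment contains $x \in S_1$ and $y \in S_2$). From the definition of CHAR, the probability of moving from $x$ to $S_2$ in one step is at least the contribution from picking axis $e_j$ and landing in $\ell \cap S_2$, namely
\[
P_x(S_2) \;\geq\; \frac{1}{n}\cdot\frac{L_2}{L}.
\]
Combined with the assumption $x \in S_1'$, this yields $L_2/L < 1/2$. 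By a symmetric argument applied to $y \in S_2'$ along the same line $\ell$, we get $L_1/L < 1/2$.

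These two inequalities contradict $L_1 + L_2 = L$, which forces $L_1/L + L_2/L = 1$. Hence no such pair $(x,y)$ exists, and $S_1', S_2'$ are axis-disjoint. There is no substantial obstacle here; the only point worth stating carefully is the deterministic lower bound on $P_x(S_2)$ in terms of the one-dimensional measure of $\ell \cap S_2$, which follows directly from the description of one CHAR step (uniform choice of axis, then uniform sample along the chord). The argument does not use any isoperimetry and works for any measurable partition of $K$.
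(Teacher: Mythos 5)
Your proof is correct and follows essentially the same route as the paper: contradiction via a common axis-parallel line $\ell$ through points of $S_1'$ and $S_2'$, using the one-step bound $P_x(S_2)\ge \frac{1}{n}\cdot\frac{\len(\ell\cap S_2)}{\len(\ell\cap K)}$ (and its symmetric counterpart) to force both chord fractions below $\frac{1}{2}$, contradicting $\len(\ell\cap S_1)+\len(\ell\cap S_2)=\len(\ell\cap K)$. The only difference is cosmetic: you explicitly note that $x\neq y$ forces agreement in exactly $n-1$ coordinates, a detail the paper leaves implicit.
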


\begin{proof}
For the sake of contradiction, assume that $S_1'$ and $S_2'$ are not axis-disjoint. Then there exists an axis-parallel line, $\ell$, passing through
both $S_{1}'$ and $S_{2}'$.

Let $x\in S_{1}'\cap \ell$ and $y\in S_{2}'\cap \ell$. From the definition of $S_1'$ and $S_2'$,
\begin{align}
     \frac{1}{n}\dfrac{\len(\ell\cap S_{2})}{\len(\ell \cap K)} &\leq P_{x}(S_{2})\leq \frac{1}{2n}, \;\text{and}  \label{eq:15} \\
      \frac{1}{n}\dfrac{\len(\ell\cap S_{1})}{\len(\ell \cap K)} &\leq P_{y}(S_{1})\leq \frac{1}{2n} \;.\label{eq:16}
\end{align}
Adding~\eqref{eq:15} and~\eqref{eq:16} implies that 
\begin{equation*}
    \len(\ell\cap S_{2}) + \len(\ell\cap S_{1})<\len(\ell\cap K),
\end{equation*}
which is a contradiction. Therefore, $S_1'$ and $S_2'$ are axis-disjoint.
\end{proof}
\begin{theorem} \label{thm:conductance}
Let $K$ be a convex body in $\mathbb{\mathbb{R}}^{n}$ containing
a unit ball with $R^{2}=\E_{K}(\norm{x-z_K}^{2})$ where $z_K$ is the centroid of $K$. Then the $s$-conductance
of Coordinate Hit-and-Run in $K$ is at least $\dfrac{s}{128\cdot 10^2 \cdot R n^{4.5}}$.
\end{theorem}

\begin{proof}
Let $S_1 \subseteq K$ be a measurable subset of $K$ with $s < \pi_K(S_1)\leq 1/2$ and let $S_2 = K\backslash S_1$.
Let 
\begin{equation*}
 S_{1}'=\{x\in S_{1}:P_{x}(S_{2})<\frac{1}{2n}\}
\quad
\text{ and }
\quad
S_{2}'=\{x\in S_{2}:P_{x}(S_{1})<\frac{1}{2n}\}.
\end{equation*}
The ergodic flow of $S_1$ is given by
\begin{align*}
p(S_1) &= \int_{x\in S_{1}}P_{x}(S_{2})\; {d\pi_K(x)} \\
& =\int_{x\in S_{1}'}P_{x}(S_{2})\;d\pi_K(x)+\int_{x\in S_{1}\backslash S_{1}'}P_{x}(S_{2})\;d\pi_K(x).
\end{align*} 
Since $\int_{x\in S_{1}\backslash S_{1}'}P_{x}(S_{2})\;d\pi_K(x) \geq \vol(S_{1}\backslash S_{1}')/2n$, we get
\begin{equation}
    p(S_1) \geq \int_{x\in S_{1}\backslash S_{1}'}P_{x}(S_{2})\;d\pi_K(x) \geq \frac{\vol(S_{1}\backslash S_{1}')}{2n \cdot \vol(K)}\;. \label{eq:17}
\end{equation}
We can also expand the ergodic flow of $S_1$ as
\begin{align}
p(S_1) = \int_{x\in S_{1}}P_{x}(S_{2}) &\;d\pi_K(x) \geq\int_{x\in S_{1}}P_{x}(S_{2}\backslash S_{2}') \;d\pi_K(x) \notag \\
&\stackrel{(1)}{=}\int_{y\in S_{2}\backslash S_{2}'}P_{y}(S_{1}) \;d\pi_K(y)\geq\frac{\vol(S_{2}\backslash S_{2}')}{2n \cdot \vol(K)}\;,\label{eq:18}
\end{align} 
where (1) follows from time-reversibility of the Markov chain.

If $\vol(S_{1}')<\vol(S_{1})/2$, then by~\eqref{eq:17}, \begin{equation*}
    p(S_1)  \geq \frac{\vol(S_{1})}{4n \cdot \vol(K)} = \frac{\pi_K(S)}{4n}\;.
\end{equation*}
This implies $\phi_s(S_1) \geq \frac{1}{4n}$. Similarly, if $\vol(S_{2}')<\vol(S_{2})/2$, then by \eqref{eq:18},
\begin{equation*}
     p(S_1)  \geq \frac{\vol(S_{2})}{4n \cdot \vol(K)} \geq \frac{\vol(S_{1})}{4n \cdot \vol(K)} = \frac{\pi_K(S)}{4n}\;.
\end{equation*}
This also implies $\phi_s(S_1) \geq \frac{1}{4n}$. 

So, assume that $\vol(S_{1}')\geq \vol(S_{1})/2$ and $\vol(S_{2}')\geq\vol(S_{2})/2$. Let $S_{3}'=K\backslash \{ S_{1}'\cup S_{2}'\}$. Lemma~\ref{lem:disjoint} implies that $S_{1}'$ and $S_{2}'$ are axis-disjoint. Thus, using
Theorem~\ref{thm:iso} with $\varepsilon = s/2$, we get
\begin{equation}
    \vol(S_{3}')\geq s\cdot \psi \cdot \left(\min\{\vol(S_{1}'),\vol(S_{2}')\}-\frac{s}{2}\cdot\vol(K)\right),  \label{eq:19}
\end{equation}
where $\psi = 1/(1600 \cdot n^{3.5}R)$.

Adding~\eqref{eq:17} and~\eqref{eq:18},
 \begin{align*}
p(S_1)&\geq\frac{1}{2}\cdot \left(\frac{\vol(S_{1}\backslash S_{1}')}{2n \cdot \vol(K)} + \frac{\vol(S_{2}\backslash S_{2}')}{2n \cdot \vol(K)}\right)=\frac{\vol(S_{3}')}{4n \cdot \vol(K)} \\
 &\geq \frac{s\cdot \psi}{4n \cdot \vol(K)}\cdot\left(\min\{\vol(S_{1}'),\vol(S_{2}')\}-\frac{s}{2}\cdot\vol(K)\right) \tag*{(from~\eqref{eq:19})}\\
 &\geq\frac{s\cdot \psi}{8n \cdot \vol(K)} \cdot\left(\min\{\vol(S_{1}),\vol(S_{2})\}- s\cdot\vol(K)\right)\\
 & \geq\frac{s \cdot \psi }{8n \cdot \vol(K)}\cdot \left(\vol(S_{1})- s \cdot\vol(K)\right) = \frac{s \cdot \psi}{8n}\cdot \left(\pi_K(S_{1})- s\right).
\end{align*}
So, for any $S_1 \subseteq K$ with $s < \pi_K(S_1) \leq 1/2$, we get
\begin{equation*}
    \frac{p(S_1)}{\pi_K(S_1)-s}\geq \frac{s \cdot \psi}{8n}.
\end{equation*}
Thus, the $s$-conductance of the CHAR, $\phi_s$, is at least
\begin{equation*}
    \phi_s \geq \frac{s \cdot \psi}{8n} = \frac{s}{128\cdot 10^2 \cdot R n^{4.5}}. \qedhere
\end{equation*}
\end{proof}

\begin{proof}[Proof of Theorem \ref{thm:mixing}]
For a convex body $K$, let $\pi_0$ be the starting distribution of CHAR such that $\pi_0$ is $M$-warm with respect to $\pi_K$. Let $\pi_t$ be the distribution after $t$ steps of CHAR. Using Theorem \ref{thm:sconductance} with $s = \frac{\varepsilon}{2M}$, we get
\begin{align*}
    H_s \leq \frac{\varepsilon}{2}  \quad \text{and} \quad d_{TV}(\pi_{t},\pi_K)\leq \frac{\varepsilon}{2} + M\left(1-\frac{\phi_{s}^{2}}{2}\right)^{t}.
\end{align*}
The above inequality, along with Theorem \ref{thm:conductance}, implies that
\begin{align*}
t = \frac{4}{\phi_s^2} \log \frac{2M}{\varepsilon} <   7\cdot 10^4 \cdot \frac{ M^2 R^2 n^9}{\varepsilon^2} \log \frac{2M}{\varepsilon}
\end{align*}
steps of the CHAR suffice to ensure $ d_{TV}(\pi_t,\pi_K)\leq \varepsilon$.
\end{proof}

\section{\label{sec:Lower-bound}Lower bound}
\begin{theorem}
There exists a convex body $K \in \mathbb{R}^n$ containing a unit ball with diameter $D$, such that the conductance of CHAR on $K$ is $O(1/n^{1.5}D)$.
\end{theorem}

We construct a convex body $K \in \mathbb{R}^n$ containing a unit ball such that $K$ is a skew-cylinder along the $e_1$ axis whose cross-sections are $(n-1)$-dimensional unit balls. The skewness of the body forces CHAR to take small steps in the $e_1$ direction.

Formally, let $B(x)$ be an $(n-1)$-dimensional unit ball centered at $(x,0,\ldots, 0)$. The convex body $K$ is defined as
\begin{equation*}
    K = \{x \in \mathbb{R}^n: 0 \leq x_1 \leq D, \; (x_2, \ldots, x_n) \in B(x_1) \}.
\end{equation*}

\begin{figure}
    \centering
    \includegraphics[width = 0.55\textwidth]{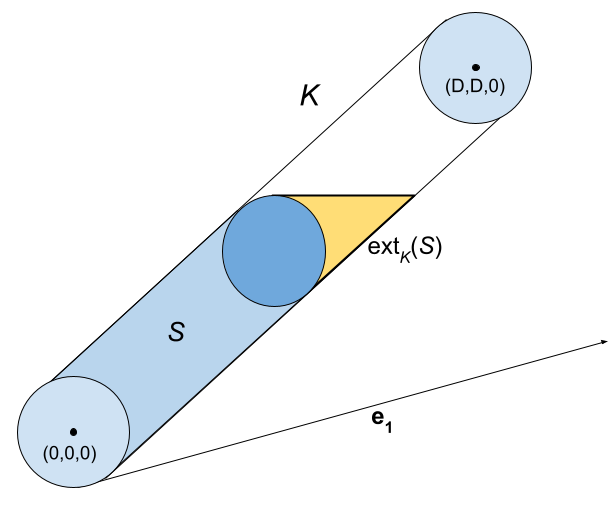}
    \caption{The lower bound construction in 3 dimensions. $S$ is shaded in blue. The extension of $S$ along $e_1$ is shaded in yellow. To escape $S$, Coordinate Hit-and-Run needs to move along the $e_1$ axis. However, the skewness of $K$ forces the step size along $e_1$ to be small.}
    \label{fig:skew}
\end{figure}

Let $d = D/2$ and let $S = \{x \in K: x_1 \leq d\}$. Then, we claim that $\phi(S)$ is $O\left(1/n^{1.5}D\right)$. Before proving this claim, we define axis-parallel extensions of subsets of convex bodies.

\begin{definition}[Axis-parallel Extension]
For a convex body $K$ in $\mathbb{R}^n$ and a measurable subset $S \subseteq K$, the axis-parallel extension of $S$ in $K$, denoted by $\text{ext}_K(S)$ is defined as 
\begin{equation*}
    \mathrm{ext}_K(S) = \{x \in K\backslash S: \exists y \in S \text{ such that } \vert \{ i \in [n]: x_i = y_i\}\vert = n-1\}.
\end{equation*}
In other words, $\text{ext}_K(S)$ is the set of points in $K\backslash S$ obtained by changing exactly $1$ coordinate from a point in $S$. 
\end{definition}

We will first bound the volume of $\mathrm{ext}_K(S)$ and then use it to bound the conductance of CHAR on $K$.
\begin{lemma}
The volume of the extension of $S$ in $K$ is at most
\begin{equation*}
    \vol(\mathrm{ext}_K(S)) \leq \frac{40}{\sqrt{n}D} \cdot \vol(S).
\end{equation*}
\end{lemma}
\begin{proof}
The extension of $S$ goes beyond $S$ only along the $e_1$ axis, and 
\begin{equation*}
    \mathrm{ext}_K(S) = \{x \in K: x_1 \in (d, d+1], \; (x_2, \ldots, x_n) \in B(d) \cap B(x_1)\}.
\end{equation*}
Each $(n-1)$-dimensional slice of $ \mathrm{ext}_K(S)$ is the intersection of two $(n-1)$-dimensional unit balls. This gives 
\begin{align*}
    \vol( \mathrm{ext}_K(S)) &= \int_{t = 0}^1 \vol(B(d) \cap B(d+t)) dt. 
\end{align*}

\begin{figure}[h]
  \begin{center}
    \includegraphics[scale=0.4]{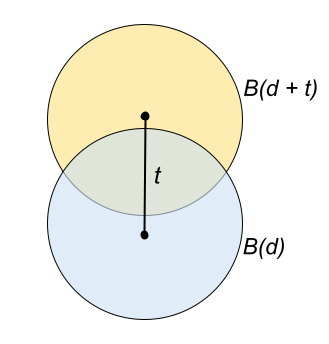}
  \end{center}
  \caption{$B(d+t) \cap B(t)$}
  \label{fig:intersection}
\end{figure}
Note that $\vol(B(d) \cap B(d+t))$ is equal to twice the volume of spherical cap at distance $t/2$ from the center of an $(n-1)$ unit ball. 
By \cite{tkocz2012upper}, the volume of a spherical cap at distance $t$ from the center of an $(n-1)$-dimensional unit ball, $C_{n-1}$, is at most $e^{-\frac{nt^2}{2}} \cdot \vol(C_{n-1})$.

Thus $\vol(B(d) \cap B(d+t)) \leq e^{-\frac{nt^2}{8}} \cdot \vol_{n-1}(B(d))$, and
\begin{align*}
    \vol( \mathrm{ext}_K(S)) &= \int_{t = 0}^1 \vol(B(d) \cap B(d+t)) dt \\
    &\leq 2\cdot \int_{t = 0}^\infty e^{-\frac{nt^2}{8}} \cdot \vol_{n-1}(B(d))dt \leq \frac{20}{\sqrt{n}} \cdot \vol_{n-1}(B(d)).
\end{align*}
Since $\vol(S) = \frac{D}{2} \cdot \vol_{n-1}(B(d))$, we have $\vol( \mathrm{ext}_K(S)) \leq \frac{40}{\sqrt{n}D}\cdot \vol(S)$.
\end{proof}

\begin{lemma}
The conductance of Coordinate Hit-and-Run on $K$ is $O(1/n^{1.5}D)$.
\end{lemma}
\begin{proof}
The only points in $S$ with non-zero probability of escaping in one step are
\begin{equation*}
    S_1 = \left\{x: d-1\leq x_1 \leq d, \; (x_2, \ldots, x_n) \in B(d) \cap B(x_1)\right\}.
\end{equation*} 
Any point in $S_1$ can move out of $S$ in one step of CHAR if only if the first coordinate is selected for re-sampling. Therefore, for any $x \in S_1$,
\begin{equation*}
    P_x(K\backslash S) \leq 1/n.
\end{equation*}
By symmetry, $\vol(S_1) = \vol(\mathrm{ext}_K(S)) \leq \frac{40}{\sqrt{n} D} \cdot \vol(S)$.
Therefore,
\begin{align*}
    p(S) = \int_S P_x(K\backslash S) &\;d\pi_K(x) =  \int_{S_1} P_x(K\backslash S) \;d\pi_K(x) \\
    &\leq  \frac{1}{n} \cdot \frac{\vol(S_1)}{\vol(K)} \leq \frac{1}{n} \cdot \frac{40}{\sqrt{n}D} \cdot \pi_K(S).
\end{align*}
Therefore, $\phi(S) \leq 40/(n^{1.5}D)$, and the conductance of CHAR in $K$, $\phi$, is at most $O(1/n^{1.5}D)$. 
\end{proof}

We expect that this translates to a lower bound of $\Omega^*(n^{3}D^{2})$
on the mixing rate even from a warm start. Even though
this is worse than the $O^*(n^{2}D^{2})$ mixing rate of
Hit-and-Run, it is an interesting open problem to determine the precise
mixing rate of CHAR.

\vspace{1em}
\noindent\textbf{Acknoledgements}:
This work was supported in part by NSF awards DMS-1839323, CCF-1909756 and CCF-2007443.
The authors thank Ben Cousins for helpful discussions.
\bibliographystyle{spmpsci}
\bibliography{char}
\end{document}